\newtheorem{theorem}{Theorem}
\newtheorem*{corollary}{Corollary 1}
\newcommand{\myincludegraphics}{\includegraphics[trim=0cm 0cm 0cm 0.7cm, clip=true, width=0.9\columnwidth]}
\newcommand{\raisecapt}{\vspace{-9pt}}
\newcommand{\myVM}[3]{\mathbf{#1}_{\mathrm{#2}}^{#3}} 
\newcommand{\myVMIndex}[4]{\mathbf{#1}_{\mathrm{#2},#3}^{#4}} 
\newcommand{\Norm}[3]{\left\lVert #1 \right\rVert_{#2}^{#3}} 
\newcommand{\tr}[3]{\mathrm{tr}\left\{\mathbf{#1}_{\mathrm{#2}}^{#3}\right\}}
\newcommand{\eq}[1]{(#1)}
\newcommand{\Diag}[1]{\mathrm{diag}\{#1\}}
\newcommand{\Abs}[2]{\left\lvert #1 \right\rvert^{#2}}
\newcommand{\Exp}[1]{\mathbb{E}\left\{ #1 \right\}}
\newcommand{\Var}[1]{\mathrm{Var}\left\{ #1 \right\}}
\begin{document}

\bstctlcite{IEEEexample:BSTcontrol} 

\title{Optimal SNR Analysis for Single-user RIS Systems}

\author{\IEEEauthorblockN{%
		Ikram Singh\IEEEauthorrefmark{1}, %
		Peter J. Smith\IEEEauthorrefmark{2}, %
		Pawel A. Dmochowski\IEEEauthorrefmark{1}}
	\IEEEauthorblockA{\IEEEauthorrefmark{1}%
		School of Engineering and Computer Science, Victoria University of Wellington, Wellington, New Zealand}
	\IEEEauthorblockA{\IEEEauthorrefmark{2}%
		School of Mathematics and Statistics, Victoria University of Wellington, Wellington, New Zealand}
	\IEEEauthorblockA{email:%
		~\{ikram.singh,peter.smith,pawel.dmochowski\}@ecs.vuw.ac.nz
	}%
}

\maketitle
\begin{abstract}
In this paper, we present an analysis of the optimal uplink SNR of a SIMO Reconfigurable Intelligent Surface (RIS)-aided wireless link. We assume that the channel between base station and RIS is a rank-1 LOS channel while the user (UE)-RIS and UE-base station (BS) channels are correlated Rayleigh. We derive an exact closed form expression for the mean SNR and an approximation for the SNR variance leading to an accurate gamma approximation to the distribution of the UL SNR. Furthermore, we analytically characterise the effects of correlation on SNR, showing that correlation in the UE-BS channel can have negative effects on the mean SNR, while correlation in the UE-RIS channel improves system performance. For systems with a large number of RIS elements, correlation in the UE-RIS channel can cause an increase in the mean SNR of up to 27.32\% relative to an uncorrelated channel. 
\end{abstract}
\IEEEpeerreviewmaketitle
%
%
\section{Introduction}
Reconfigurable Intelligent Surface (RIS) aided wireless networks are currently the subject of considerable research attention due to their ability to manipulate the channel between users (UEs) and base station (BS) via the RIS. Assuming that channel state information (CSI) is known at the RIS, one can intelligently alter the RIS phases, essentially changing the channel, to improve various aspects of system performance. Here, we focus on a single user system and assume that the RIS is carefully located near the BS such that a rank-1 line-of-sight (LOS) channel is formed between the BS and RIS. 

System scenarios with a LOS channel between the BS-RIS and a single-user are also considered in \cite{GMD,Max_Min,OptPHI} with motivation for the LOS assumption given in \cite{Max_Min}. All of these existing works aim to enhance the system to achieve some optimal system performance (e.g. sum rate, SINR, etc.) by tuning the RIS phases. In particular, \cite{OptPHI} and \cite{Max_Min}  provide a closed form RIS phase solution with and without the presence of a direct UE-BS channel for a single user setting, respectively. However, once the optimal RIS has been defined there is no \textit{exact} analysis of the mean SNR and no analysis of correlation impact on the mean SNR in \cite{GMD,Max_Min,OptPHI}.

For the UE to RIS and the direct UE to BS links, scattered channels are a reasonable assumption and spatial correlation in the channels is an important factor, especially at the RIS where small inter-element spacing may be envisaged. Several papers do consider spatial correlation in the small-scale fading channels \cite{Max_Min,Prop_Gauss,Nadeem,Trans_Design,Nadeem2,Two_Timescale}, however, these papers are simulation based and no \textit{exact} analysis is given on the impact of correlation on the mean SNR. 


Statistical properties of the channel have been investigated in existing literature. For example, \cite{GammaApproxY,Relay_Comp} provide a closed form expression for the mean SNR in the absence of a UE-BS channel with \cite{Relay_Comp} additionally providing a probability density function (PDF) and a cumulative distribution function (CDF) for the distribution of the SNR. In \cite{GammaApproxY,Perf_Analy}, an upper bound is given for the ergodic capacity and in \cite{Asymp_Opt} a lower bound is given for the ergodic capacity. However, there is no existing literature on closed form expressions for the mean SNR and SNR variance of an optimal RIS-aided wireless system, in the presence of correlated UE-BS, UE-RIS fading channels. 

Hence, in this paper, we focus on an analysis of the optimal uplink (UL) SNR for a single user RIS aided link with a rank-1 LOS RIS-BS channel and correlated Rayleigh fading for the UE-BS and UE-RIS channels\footnote{An extension to this work is being developed which considers Ricean fading for the UE-BS and UE-RIS channels.}.
In particular, for this system and channel model we make the following contributions: 
\begin{itemize}
	\item An exact closed-form result for mean SNR and an approximate closed form expression for SNR variance are derived. We show that a gamma distribution provides a good approximation of the UL optimal SNR distribution.
	\item Exact closed-form expressions for both mean SNR and SNR variance are derived for uncorrelated Rayleigh channels and presented as a special case.
	\item The analysis is leveraged to gain insight into the impact of spatial correlation and system dimension on the mean SNR. We prove that correlation in the UE-BS channel can have negative effects, while correlation in the UE-RIS channel improves the mean SNR. For systems with a large number of RIS elements, the latter improvement saturates to a relative gain of approximately 27.32\%.
\end{itemize}
\textit{Notation:} $\Exp{\cdot}$ represents statistical expectation. $\Re\left\{ \cdot \right\}$ is the Real operator. $\Norm{\cdot}{2}{}$ denotes the $\ell_{2}$ norm. $\mathcal{CN}(\boldsymbol{\mu},\mathbf{Q})$ denotes a complex Gaussian distribution with mean $\boldsymbol{\mu}$ and covariance matrix $\mathbf{Q}$. $\mathbf{1}_n$ represents an $n \times n$ matrix with unit entries. The transpose, Hermitian transpose and complex conjugate operators are denoted as $(\cdot)^{T},(\cdot)^{H},(\cdot)^{*}$, respectively. The angle of a vector $\myVM{x}{}{}$ of length $N$ is defined as $\angle \myVM{x}{}{} = [ \angle x_{1},\ldots, \angle x_{N} ]^{T}$ and the exponent of a vector is defined as $e^{\myVM{x}{}{}} = [ e^{x_{1}},\ldots,  e^{x_{N}} ]^{T}$.

\section{System Model}\label{Sec: System Model Main}
As shown in Fig.~\ref{Fig: System Model}, we examine a RIS aided single user single input multiple output (SIMO) system where a RIS with $N$ reflective elements is located close to a BS with $M$ antennas such that a rank-1 LOS condition is achieved between the RIS and BS. 
\begin{figure}[h]
	\centering
	\includegraphics[scale=1.1]{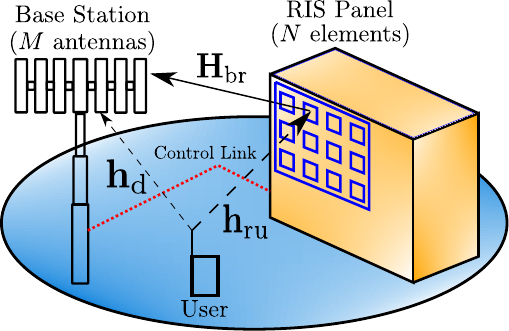}
	\caption{System model.}
	\label{Fig: System Model}
\end{figure}
\subsection{Channel Model}\label{Sec: Channel Model}
Let $\myVM{h}{d}{} \in \mathbb{C}^{M \times 1}$, $\myVM{h}{ru}{} \in \mathbb{C}^{N \times 1}$, $\myVM{H}{br}{} \in \mathbb{C}^{M \times N}$ be the UE-BS, UE-RIS and RIS-BS channels, respectively. The diagonal matrix $\myVM{\Phi}{}{} \in \mathbb{C}^{N \times N}$, where $\mathbf{\Phi}_{r,r} = e^{j\phi_{r}}$ for $r=1,2,\ldots,N$, contains the reflection coefficients for each RIS element. The global UL channel is thus represented by
\begin{equation}\label{Eq: 1}
	\mathbf{h} = \myVM{h}{d}{} + \myVM{H}{br}{} \myVM{\Phi}{}{} \myVM{h}{ru}{}.
\end{equation}
In the channel model, we consider the correlated Rayleigh channels: $\myVM{h}{d}{} = \sqrt{\beta_{\textrm{d}}} \myVM{R}{d}{1/2} \myVM{u}{d}{}$ and $\myVM{h}{ru}{} = \sqrt{\beta_{\textrm{ru}}} \myVM{R}{ru}{1/2} \myVM{u}{ru}{}\triangleq \sqrt{\beta_{\textrm{ru}}}\myVM{\tilde{h}}{ru}{} $ where $\beta_{\textrm{d}}$ and $\beta_{\textrm{ru}}$ are the link gains, $\myVM{R}{d}{}$ and $\myVM{R}{ru}{}$ are the correlation matrices for UE-BS and UE-RIS links respectively and $\myVM{u}{d}{},\myVM{u}{ru}{} \sim \mathcal{CN}(\mathbf{0},\mathbf{I})$. The rank-1 LOS channel from RIS to BS has link gain $\beta_{\textrm{br}}$ and is given by $\myVM{H}{br}{} = \sqrt{\beta_{\textrm{br}}}\myVM{a}{b}{}\myVM{a}{r}{H}$ where $\myVM{a}{b}{}$ and $\myVM{a}{r}{}$ are topology specific steering vectors at the BS and RIS respectively. Particular examples of steering vectors for a vertical uniform rectangular array (VURA) are in Sec.~\ref{Sec: Results}.


Note, that the correlation matrices $\myVM{R}{ru}{}$ and $\myVM{R}{d}{}$ can represent any correlation model. For simulation purposes, we will use the well-known exponential decay model,
\begin{equation}
	(\myVM{R}{ru}{})_{ik}= \rho_{\mathrm{ru}}^{\frac{d_{i,k}}{d_{\textrm{b}}}} 
	\quad , \quad
	(\myVM{R}{d}{})_{ik} = \rho_{\mathrm{d}}^{\frac{d_{i,k}}{d_{\textrm{r}}}},
\end{equation}
where $0 \leq \rho_{\text{ru}} \leq 1$, $0 \leq \rho_{\text{d}} \leq 1$. $d_{i,k}$ is the distance between the $i^{\textrm{th}}$ and $k^{\textrm{th}}$ antenna/element at the BS/RIS. $d_{\textrm{b}}$ and $d_{\textrm{r}}$ are the nearest-neighbour BS antenna separation and RIS element separation, respectively, which are measured in wavelength units. $\rho_{\mathrm{d}}$ and $\rho_{\mathrm{ru}}$ are the nearest neighbour BS antenna and RIS element correlations, respectively.

\subsection{Optimal SNR}\label{Sec: Optimal SNR}
Using  (\ref{Eq: 1}), the received signal at the BS is,
$
\mathbf{r} = \left(\myVM{h}{d}{} + \myVM{H}{br}{} \myVM{\Phi}{}{} \myVM{h}{ru}{} \right)s + \mathbf{n} \triangleq \myVM{h}{}{}s + \mathbf{n},
$
where $s$ is the transmitted signal with power $E_{s}$ and $\mathbf{n} \sim \mathcal{CN}(\mathbf{0},\sigma^{2}\mathbf{I})$. For a single user system, matched filtering (MF) is the optimal combining method, with UL SNR, given by
$
\text{SNR} = \myVM{{h}}{}{H}\myVM{{h}}{}{}\bar{\tau},
$
where $\bar{\tau} = \frac{E_{s}}{\sigma^{2}}$. The optimal RIS phase matrix to maximize the SNR can be computed using the main steps outlined in \cite[Sec. III-B]{OptPHI} but using an UL channel model instead of downlink. Substituting the channel vectors and through some algebraic manipulation, the optimal RIS phase matrix is
\begin{equation}\label{Eq: Optimum PHI}
	\myVM{\Phi}{}{} = 
	\frac{\myVM{a}{b}{H} \myVM{h}{d}{}}{\Abs{\myVM{a}{b}{H} \myVM{h}{d}{}}{}}
	\Diag{e^{j\angle\myVM{a}{r}{}}}
	\Diag{e^{-j\angle\myVM{h}{ru}{}}}.
\end{equation}
Substituting \eq{\ref{Eq: Optimum PHI}} into $\myVM{h}{}{}$, the optimal UL global channel is,
\vspace{-5pt}
\begin{equation}\label{Eq: Final r signal}
	\myVM{h}{}{} =  \myVM{h}{d}{} + \sqrt{\beta_{\mathrm{br}}\beta_{\mathrm{ru}}} \psi \sum_{n=1}^{N}\Abs{\myVMIndex{\tilde{h}}{ru}{n}{}}{} \myVM{a}{b}{} ,
\end{equation}
where $\psi = \myVM{a}{b}{H} \myVM{h}{d}{}/|\myVM{a}{b}{H} \myVM{h}{d}{}|$. Hence, the optimal UL SNR is
\begin{equation}\label{Eq: SNR Eq}
	\text{SNR} = 
	\left( \myVM{h}{d}{H} + \alpha^{*}\myVM{a}{b}{H} \right)
	\left( \myVM{h}{d}{} + \alpha^{}\myVM{a}{b}{} \right)
	\bar{\tau},
\end{equation}
where $\alpha = \sqrt{\beta_{\textrm{br}}\beta_{\textrm{ru}}} \psi \sum_{n=1}^{N}\Abs{\myVMIndex{\tilde{h}}{ru}{n}{}}{} \triangleq  \sqrt{\beta_{\textrm{br}}\beta_{\textrm{ru}}} \psi Y$.

\section{$\mathbb{E}\{\text{SNR}\}$ and $\textsc{var}\{\text{SNR}\}$}\label{Sec: E{SNR} and var{SNR}}

\subsection{Correlated Rayleigh Case}\label{SubSec: Corr Rayl}
Here, we provide an exact closed form result for $\mathbb{E}\{\text{SNR}\}$ and an exact expression for $\Var{\text{SNR}}$.
\begin{theorem}
The mean SNR is given by
\begin{align}\label{Eq: Corr E{SNR}}
	& \Exp{\mathrm{SNR}} = \notag \\
	& \left( 
	\hspace{-0.3em}\beta_{\mathrm{d}}M  +
	\dfrac{NA\pi\sqrt{\beta_{\mathrm{d}}\beta_{\mathrm{br}}\beta_{\mathrm{ru}}} }{2}  
	+ \beta_{\mathrm{br}}\beta_{\mathrm{ru}}M(N + F) 
	\right)\bar{\tau},
\end{align}
with $F$ given by
\begin{equation}\label{Eq: HyperGeometric}
	F = \underset{i \neq j}{\sum_{i=1}^{N} \sum_{j=1}^{N}} \dfrac{\pi}{4}\left( 1 - \left\lvert\rho_{ij}\right\rvert^2\right)^2 {}_{2}F_{1}\left(\frac{3}{2},\frac{3}{2};1;\left\lvert\rho_{ij}\right\rvert^2 \right),
\end{equation}
and $A = \left\lVert \myVM{R}{d}{1/2} \myVM{a}{b}{} \right\rVert _{2}$, where ${}_{2}F_{1}(\cdot)$ is the Gaussian hypergeometric function and $\rho_{ij} = \left(\myVM{R}{ru}{} \right)_{ij}$. 
\end{theorem}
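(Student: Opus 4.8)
The plan is to expand the quadratic form in \eqref{Eq: SNR Eq}, take the expectation term by term, and exploit two structural facts: $\myVM{h}{d}{}$ and $\myVM{h}{ru}{}$ are independent, and the unit-modulus scalar $\psi$ collapses the two cross terms into a single real quantity. First I would expand
\begin{equation}
\mathrm{SNR}/\bar\tau = \Norm{\myVM{h}{d}{}}{2}{2} + \alpha\,\myVM{h}{d}{H}\myVM{a}{b}{} + \alpha^{*}\myVM{a}{b}{H}\myVM{h}{d}{} + \Abs{\alpha}{2}\Norm{\myVM{a}{b}{}}{2}{2}.
\end{equation}
Writing $\myVM{a}{b}{H}\myVM{h}{d}{} = \Abs{\myVM{a}{b}{H}\myVM{h}{d}{}}{}\psi$, recalling $\alpha = \sqrt{\beta_{\textrm{br}}\beta_{\textrm{ru}}}\,\psi Y$ with $\Abs{\psi}{}=1$, and using $\Norm{\myVM{a}{b}{}}{2}{2}=M$, each cross term reduces to $\sqrt{\beta_{\textrm{br}}\beta_{\textrm{ru}}}\,Y\,\Abs{\myVM{a}{b}{H}\myVM{h}{d}{}}{}$ and $\Abs{\alpha}{2}\Norm{\myVM{a}{b}{}}{2}{2}=\beta_{\textrm{br}}\beta_{\textrm{ru}}M\,Y^{2}$, so
\begin{equation}
\mathrm{SNR}/\bar\tau = \Norm{\myVM{h}{d}{}}{2}{2} + 2\sqrt{\beta_{\textrm{br}}\beta_{\textrm{ru}}}\,Y\,\Abs{\myVM{a}{b}{H}\myVM{h}{d}{}}{} + \beta_{\textrm{br}}\beta_{\textrm{ru}}M\,Y^{2}.
\end{equation}

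Next I would take the expectation of each summand. For the first, $\Norm{\myVM{h}{d}{}}{2}{2}=\beta_{\textrm{d}}\,\myVM{u}{d}{H}\myVM{R}{d}{}\myVM{u}{d}{}$ has mean $\beta_{\textrm{d}}\,\mathrm{tr}(\myVM{R}{d}{})=\beta_{\textrm{d}}M$, using that $\myVM{R}{d}{}$ has unit diagonal. For the second, $Y$ is a function of $\myVM{h}{ru}{}$ only while $\Abs{\myVM{a}{b}{H}\myVM{h}{d}{}}{}$ is a function of $\myVM{h}{d}{}$ only, so the factors are independent; each $\myVMIndex{\tilde{h}}{ru}{n}{}$ is a unit-variance complex Gaussian, so $\Abs{\myVMIndex{\tilde{h}}{ru}{n}{}}{}$ is Rayleigh with mean $\sqrt{\pi}/2$ and $\Exp{Y}=N\sqrt{\pi}/2$, while $\myVM{a}{b}{H}\myVM{h}{d}{}\sim\mathcal{CN}(0,\beta_{\textrm{d}}A^{2})$ with $A=\Norm{\myVM{R}{d}{1/2}\myVM{a}{b}{}}{2}{}$ gives $\Exp{\Abs{\myVM{a}{b}{H}\myVM{h}{d}{}}{}}=\sqrt{\beta_{\textrm{d}}}\,A\sqrt{\pi}/2$; the product delivers the middle term $NA\pi\sqrt{\beta_{\textrm{d}}\beta_{\textrm{br}}\beta_{\textrm{ru}}}/2$. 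For the third, I would split $\Exp{Y^{2}}=\sum_{n}\Exp{\Abs{\myVMIndex{\tilde{h}}{ru}{n}{}}{2}}+\sum_{i\neq j}\Exp{\Abs{\myVMIndex{\tilde{h}}{ru}{i}{}}{}\Abs{\myVMIndex{\tilde{h}}{ru}{j}{}}{}}=N+\sum_{i\neq j}\Exp{\Abs{\myVMIndex{\tilde{h}}{ru}{i}{}}{}\Abs{\myVMIndex{\tilde{h}}{ru}{j}{}}{}}$, using that each $\Abs{\myVMIndex{\tilde{h}}{ru}{n}{}}{2}$ has unit mean.

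The main obstacle is the bivariate cross-moment $\Exp{\Abs{\myVMIndex{\tilde{h}}{ru}{i}{}}{}\Abs{\myVMIndex{\tilde{h}}{ru}{j}{}}{}}$. Here $(\myVMIndex{\tilde{h}}{ru}{i}{},\myVMIndex{\tilde{h}}{ru}{j}{})$ is jointly complex Gaussian with correlation coefficient $\rho_{ij}=(\myVM{R}{ru}{})_{ij}$, so the magnitudes follow a bivariate Rayleigh law, and I would establish the classical identity $\Exp{\Abs{\myVMIndex{\tilde{h}}{ru}{i}{}}{}\Abs{\myVMIndex{\tilde{h}}{ru}{j}{}}{}}=\tfrac{\pi}{4}(1-\Abs{\rho_{ij}}{2})^{2}\,{}_{2}F_{1}(\tfrac32,\tfrac32;1;\Abs{\rho_{ij}}{2})$ by expanding the modified Bessel factor $I_{0}$ in the bivariate Rayleigh density as a power series and integrating term by term, each term being a Gamma integral and the resulting series resumming to the Gaussian hypergeometric function (the equivalent compact form $\tfrac{\pi}{4}\,{}_{2}F_{1}(-\tfrac12,-\tfrac12;1;\Abs{\rho_{ij}}{2})$ follows from the Euler transformation ${}_{2}F_{1}(a,b;c;z)=(1-z)^{c-a-b}{}_{2}F_{1}(c-a,c-b;c;z)$). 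Summing over $i\neq j$ identifies the bracketed sum with $F$ in \eqref{Eq: HyperGeometric}, hence $\Exp{Y^{2}}=N+F$; multiplying the third contribution by $\beta_{\textrm{br}}\beta_{\textrm{ru}}M$, collecting the three pieces and reinstating $\bar\tau$ yields \eqref{Eq: Corr E{SNR}}. Everything besides this cross-moment is routine, but I would state the conventions $\Norm{\myVM{a}{b}{}}{2}{2}=M$ and $(\myVM{R}{d}{})_{nn}=(\myVM{R}{ru}{})_{nn}=1$ explicitly at the outset, since the clean trace and Rayleigh-mean evaluations rely on them.
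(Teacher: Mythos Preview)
Your proposal is correct and follows essentially the same route as the paper's proof in App.~\ref{AppA: E{SNR} Corr Deri}: expand \eqref{Eq: SNR Eq} into three terms, reduce the cross term via $\psi\psi^{*}=1$, and evaluate $\Exp{\Norm{\myVM{h}{d}{}}{2}{2}}=\beta_{\mathrm{d}}M$, $\Exp{Y}=N\sqrt{\pi}/2$, $\Exp{\Abs{\myVM{a}{b}{H}\myVM{h}{d}{}}{}}=\sqrt{\beta_{\mathrm{d}}}A\sqrt{\pi}/2$, and $\Exp{Y^{2}}=N+F$ using the bivariate Rayleigh cross-moment. The only cosmetic differences are that the paper substitutes $\myVM{u}{d}{}$ before expanding and cites the cross-moment identity from \cite{8567939} rather than sketching its derivation via the $I_{0}$ series.
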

\begin{proof}
See App.~\ref{AppA: E{SNR} Corr Deri} for the derivation of \eq{\ref{Eq: Corr E{SNR}}}.
\end{proof}
\begin{theorem}
The SNR variance is given by
\begin{align}\label{Eq: Corr var{SNR}}
\Var{\mathrm{SNR}}&=\Big(\beta_{\mathrm{d}}^{2} \tr{R}{d}{2} 
	+ \beta_{\mathrm{d}}^{3/2}\sqrt{\beta_{\mathrm{br}}\beta_{\mathrm{ru}}}N\pi(B - MA) \notag \\
	& + \beta_{\mathrm{d}}\beta_{\mathrm{br}}\beta_{\mathrm{ru}} A^{2} \left(4(N+F) - \dfrac{N^2\pi^2}{4} \right)\notag \\
	& + M A \sqrt{\beta_{\mathrm{d}}}\left(\beta_{\mathrm{br}}\beta_{\mathrm{ru}}\right)^{3/2} \left( 2\sqrt{\pi}C_{1} - N(N+F)\pi\right) 	\notag \\
	& + (M\beta_{\mathrm{br}}\beta_{\mathrm{ru}})^2 \left(C_{2} - (N+F)^2\right) \Big) \bar{\tau}^{2} ,
\end{align}
where $ B = MA +{\myVM{a}{b}{H} \myVM{R}{d}{2} \myVM{a}{b}{}}/{2A} $, $F$ is given by (\ref{Eq: HyperGeometric}), $A$ is given in Theorem 1 and $C_{1}=\Exp{Y^{3}},C_{2}=\Exp{Y^{4}}$. 
\end{theorem}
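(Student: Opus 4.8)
The plan is to write the SNR as a low-degree polynomial in a few scalar random variables and then reduce the variance to their mixed moments. Starting from \eq{\ref{Eq: SNR Eq}} and using $|\psi|=1$, $Y\ge 0$ and $\myVM{a}{b}{H}\myVM{a}{b}{}=M$, expanding the inner product gives
\begin{equation}\label{Eq: SNR poly}
\begin{aligned}
W \triangleq \frac{\mathrm{SNR}}{\bar{\tau}} = {}& \Norm{\myVM{h}{d}{}}{2}{2} + 2\sqrt{\beta_{\mathrm{br}}\beta_{\mathrm{ru}}}\,Y\,\Abs{\myVM{a}{b}{H}\myVM{h}{d}{}}{} \\
& {} + \beta_{\mathrm{br}}\beta_{\mathrm{ru}}MY^{2},
\end{aligned}
\end{equation}
so that $\Var{\mathrm{SNR}} = \bar{\tau}^{2}\big(\Exp{W^{2}} - (\Exp{W})^{2}\big)$, with $\Exp{W}$ already supplied by Theorem 1. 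Squaring \eq{\ref{Eq: SNR poly}} produces six terms, and the key structural fact I would exploit is that $Y$, a functional of the UE-RIS channel, is independent of the pair $\big(\Norm{\myVM{h}{d}{}}{2}{2},\,\Abs{\myVM{a}{b}{H}\myVM{h}{d}{}}{}\big)$, a functional of the UE-BS channel; hence every cross term factors into a UE-BS moment times a UE-RIS moment.

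Next I would collect the required moments. On the UE-RIS side they are $\Exp{Y}=N\sqrt{\pi}/2$ and $\Exp{Y^{2}}=N+F$ (both obtained in the proof of Theorem 1), together with $\Exp{Y^{3}}=C_{1}$ and $\Exp{Y^{4}}=C_{2}$. On the UE-BS side, write $\myVM{h}{d}{}=\sqrt{\beta_{\mathrm{d}}}\myVM{R}{d}{1/2}\myVM{u}{d}{}$ with $\myVM{u}{d}{}\sim\mathcal{CN}(\mathbf{0},\mathbf{I})$; the standard complex Gaussian quadratic-form identities $\Exp{\myVM{u}{d}{H}\myVM{R}{d}{}\myVM{u}{d}{}}=\tr{R}{d}{}=M$ and $\Exp{(\myVM{u}{d}{H}\myVM{R}{d}{}\myVM{u}{d}{})^{2}}=M^{2}+\tr{R}{d}{2}$ give the first two moments of $\Norm{\myVM{h}{d}{}}{2}{2}$, while $\myVM{a}{b}{H}\myVM{h}{d}{}$ is zero-mean complex Gaussian with variance $\beta_{\mathrm{d}}A^{2}$, $A^{2}=\myVM{a}{b}{H}\myVM{R}{d}{}\myVM{a}{b}{}$, so its magnitude is Rayleigh and $\Exp{\Abs{\myVM{a}{b}{H}\myVM{h}{d}{}}{}}=\tfrac{\sqrt{\pi}}{2}\sqrt{\beta_{\mathrm{d}}}A$, $\Exp{\Abs{\myVM{a}{b}{H}\myVM{h}{d}{}}{2}}=\beta_{\mathrm{d}}A^{2}$, $\Exp{\Abs{\myVM{a}{b}{H}\myVM{h}{d}{}}{3}}=\tfrac{3\sqrt{\pi}}{4}\beta_{\mathrm{d}}^{3/2}A^{3}$.

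The only genuinely delicate ingredient — and the step I expect to be the main obstacle — is the mixed UE-BS moment $\Exp{\Norm{\myVM{h}{d}{}}{2}{2}\,\Abs{\myVM{a}{b}{H}\myVM{h}{d}{}}{}}$, since $\Norm{\myVM{h}{d}{}}{2}{2}$ and $\Abs{\myVM{a}{b}{H}\myVM{h}{d}{}}{}$ are statistically dependent whenever $\myVM{R}{d}{}\neq\mathbf{I}$. I would evaluate it by conditioning on $g=\myVM{a}{b}{H}\myVM{h}{d}{}$: the conditional law of $\myVM{h}{d}{}$ given $g$ is complex Gaussian with mean proportional to $\myVM{R}{d}{}\myVM{a}{b}{}$ and covariance the associated Schur complement, so $\Exp{\Norm{\myVM{h}{d}{}}{2}{2}\mid g}$ is affine in $|g|^{2}$ with coefficients involving $M$, $A^{2}$ and $\myVM{a}{b}{H}\myVM{R}{d}{2}\myVM{a}{b}{}$. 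Multiplying by $|g|$ and averaging with the Rayleigh moments $\Exp{|g|}$ and $\Exp{|g|^{3}}$ computed above, the $A^{2}$-contributions combine and leave $\Exp{\Norm{\myVM{h}{d}{}}{2}{2}\,\Abs{\myVM{a}{b}{H}\myVM{h}{d}{}}{}}=\tfrac{\sqrt{\pi}}{2}\beta_{\mathrm{d}}^{3/2}B$ with $B=MA+\myVM{a}{b}{H}\myVM{R}{d}{2}\myVM{a}{b}{}/(2A)$, recovering the $B$ of the statement.

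Finally I would substitute every moment into the six-term expansion of $\Exp{W^{2}}$, subtract the square of the three-term expression for $\Exp{W}$ from Theorem 1, and group by powers of $\beta_{\mathrm{d}}$ and $\beta_{\mathrm{br}}\beta_{\mathrm{ru}}$. The $(\beta_{\mathrm{d}}M)^{2}$ pieces cancel against one another, as do the two copies of $2\beta_{\mathrm{d}}\beta_{\mathrm{br}}\beta_{\mathrm{ru}}M^{2}(N+F)$ (one from $\Exp{W^{2}}$, one from $(\Exp{W})^{2}$), and collecting what remains yields exactly the five bracketed terms of \eq{\ref{Eq: Corr var{SNR}}}. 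This last step is routine algebraic bookkeeping.
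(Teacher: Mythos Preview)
Your proposal is correct and structurally identical to the paper's proof: both expand $\mathrm{SNR}^{2}$ into the same six terms, factor each expectation via the independence of $Y$ from the UE-BS variables, and subtract $(\Exp{\mathrm{SNR}})^{2}$. The only real difference is in how you handle the mixed UE-BS moment $\Exp{\Norm{\myVM{h}{d}{}}{2}{2}\Abs{\myVM{a}{b}{H}\myVM{h}{d}{}}{}}$ (the paper's $T_{2}$). The paper introduces an orthonormal matrix $\mathbf{P}$ whose first column is $\myVM{R}{d}{1/2}\myVM{a}{b}{}/A$, sets $\mathbf{x}=\mathbf{P}^{H}\myVM{u}{d}{}\sim\mathcal{CN}(\mathbf{0},\mathbf{I})$, and rewrites the product as $\mathbf{x}^{H}\mathbf{Q}\mathbf{x}\,\Abs{x_{1}}{}A$ with $\mathbf{Q}=\mathbf{P}^{H}\myVM{R}{d}{}\mathbf{P}$; the expectation then reduces to $\mathbf{Q}_{11}\Exp{\Abs{x_{1}}{3}}+(\tr{Q}{}{}-\mathbf{Q}_{11})\Exp{\Abs{x_{1}}{}}$ and gives $B$ directly. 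Your conditioning argument computes the same quantity via the conditional Gaussian mean and Schur-complement covariance, arriving at the identical affine expression in $|g|^{2}$ and hence the same $B$. The rotation trick is slightly slicker because it bypasses the Schur complement, while your conditioning route is perhaps more transparent about \emph{why} only $\myVM{a}{b}{H}\myVM{R}{d}{2}\myVM{a}{b}{}$ and $A$ can appear; either way the remaining bookkeeping is exactly as you describe.
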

\begin{proof}See App.~\ref{AppB: var{SNR} Corr Deri} for the derivation of \eq{\ref{Eq: Corr var{SNR}}}. \end{proof}
In \eq{\ref{Eq: Corr var{SNR}}}, all terms are known except for $C_{1}$ and $C_{2}$, the third and fourth moments of $Y$. To the best of our knowledge, these moments are intractable so we present approximations of $C_{1}$ and $C_{2}$ in the following Corollary.
\begin{corollary}
	To obtain an SNR variance approximation in closed form, we approximate the 3$^{\text{rd}}$, 4$^{\text{th}}$ moments of $Y$ by,
	\begin{align}\label{Eq: Corollary}
		\hspace{ -5pt}\Exp{Y^3}  & =  b^3 a \prod_{k=1}^{2}(k+a), \quad
		\Exp{Y^4}  =  b^4 a \prod_{k=1}^{3}(k+a)
	\end{align}
with 
\begin{equation*}
	a = \dfrac{N^{2}\pi}{4(N+F) - N^{2}\pi} \quad,\quad
	b = \dfrac{2}{N\sqrt{\pi}}\left( N + F - \dfrac{N^{2}\pi}{4} \right).
\end{equation*}
where $F$ is given by \eq{\ref{Eq: HyperGeometric}}.
\end{corollary}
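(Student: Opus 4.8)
The plan is to replace the exact distribution of $Y$ by a gamma distribution and then read off its third and fourth moments in closed form. Since $Y=\sum_{n=1}^{N}\Abs{\myVMIndex{\tilde{h}}{ru}{n}{}}{}$ is a sum of $N$ (correlated) Rayleigh variables, it is non-negative and, for moderate-to-large $N$, concentrated about its mean; a gamma variable $G$ with shape $a$ and scale $b$ has the same support and qualitative behaviour, so we posit $Y\approx G$ in distribution and fix $(a,b)$ by matching the first two moments of $Y$.

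First I would compute the two moments needed to calibrate the fit. Because every entry of $\myVM{\tilde{h}}{ru}{}$ is a unit-variance complex Gaussian, each $\Abs{\myVMIndex{\tilde{h}}{ru}{n}{}}{}$ is Rayleigh with $\Exp{\Abs{\myVMIndex{\tilde{h}}{ru}{n}{}}{}}=\sqrt{\pi}/2$ and $\Exp{\Abs{\myVMIndex{\tilde{h}}{ru}{n}{}}{2}}=1$, hence $\Exp{Y}=N\sqrt{\pi}/2$. Expanding the square, $\Exp{Y^{2}}=\sum_{n=1}^{N}\Exp{\Abs{\myVMIndex{\tilde{h}}{ru}{n}{}}{2}}+\sum_{i\neq j}\Exp{\Abs{\myVMIndex{\tilde{h}}{ru}{i}{}}{}\Abs{\myVMIndex{\tilde{h}}{ru}{j}{}}{}}$, and the cross-moments are exactly the bivariate-Rayleigh product moments $\frac{\pi}{4}(1-\left\lvert\rho_{ij}\right\rvert^{2})^{2}\,{}_{2}F_{1}(\frac32,\frac32;1;\left\lvert\rho_{ij}\right\rvert^{2})$ (the envelope-correlation formula, cast in this form by an Euler transformation of ${}_{2}F_{1}$), i.e. the summand of $F$ in \eq{\ref{Eq: HyperGeometric}}. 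Therefore $\Exp{Y^{2}}=N+F$ and $\Var{Y}=N+F-N^{2}\pi/4$; the same second moment can alternatively be extracted from the derivation of Theorem~1 in App.~\ref{AppA: E{SNR} Corr Deri}, where $\Exp{Y^{2}}$ already appears inside $\Exp{\mathrm{SNR}}$ through the term $\beta_{\mathrm{br}}\beta_{\mathrm{ru}}M(N+F)$.

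Next I would match the gamma moments $\Exp{G}=ab$ and $\Var{G}=ab^{2}$ to these quantities. Dividing, $b=\Var{Y}/\Exp{Y}$ and $a=\Exp{Y}/b=(\Exp{Y})^{2}/\Var{Y}$; inserting $\Exp{Y}=N\sqrt{\pi}/2$ and $\Var{Y}=N+F-N^{2}\pi/4$ and simplifying gives precisely the stated $a=\frac{N^{2}\pi}{4(N+F)-N^{2}\pi}$ and $b=\frac{2}{N\sqrt{\pi}}(N+F-\frac{N^{2}\pi}{4})$. Finally, the closed-form gamma moments $\Exp{G^{m}}=b^{m}\prod_{k=0}^{m-1}(a+k)=b^{m}a\prod_{k=1}^{m-1}(k+a)$ at $m=3,4$ yield the claimed $\Exp{Y^{3}}\approx b^{3}a\prod_{k=1}^{2}(k+a)$ and $\Exp{Y^{4}}\approx b^{4}a\prod_{k=1}^{3}(k+a)$.

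The step that carries the real weight is not a calculation but the justification of the gamma ansatz, since the Corollary is an approximation rather than an identity. I would motivate it by noting that in the uncorrelated case $Y$ is a sum of i.i.d. Rayleigh terms and is therefore asymptotically Gaussian by the central limit theorem, while the moment-matched gamma, whose shape then grows with $N$, also tends to a Gaussian with the matched mean and variance; for the correlation models considered here the pairwise correlations decay with distance, so the same heuristic applies, and the accuracy of the resulting $\Var{\mathrm{SNR}}$ expression is confirmed by the Monte-Carlo comparison in Sec.~\ref{Sec: Results}.
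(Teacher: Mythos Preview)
Your proposal is correct and follows essentially the same route as the paper's own proof in App.~\ref{AppD: E{Y^3},E{Y^4} Approx}: motivate a gamma fit for $Y$, match the first two moments using $\Exp{Y}=N\sqrt{\pi}/2$ and $\Exp{Y^{2}}=N+F$ (both already obtained in App.~\ref{AppA: E{SNR} Corr Deri}) to obtain the stated $a,b$, and then read off the third and fourth gamma moments. Your additional CLT-style justification and the explicit derivation of the cross-moment via the bivariate Rayleigh formula are extra detail, but the method and calculations coincide with the paper's.
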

\begin{proof}See App.~\ref{AppD: E{Y^3},E{Y^4} Approx} for the derivation of \eq{\ref{Eq: Corollary}}. \end{proof}
From Theorem 1 and the Corollary, we have $\Exp{\text{SNR}}$ and an approximation to $\Var{\text{SNR}}$. This enable us to fit a gamma distribution as an approximation to the SNR distribution. Motivation and reasoning for this approach is given in Sec.~\ref{SubSec: Gamma approx fit for SNR}.
\subsection{Special Case: Uncorrelated Rayleigh Case}\label{SubSec: UnCorr Rayl}
For independent Rayleigh fading, we provide exact closed form expressions for both $\mathbb{E}\{\text{SNR}\}$ and $\text{Var}\{\text{SNR}\}$. Here, $\myVM{h}{d}{},\myVM{h}{ru}{} \sim \mathcal{CN}(\mathbf{0},\mathbf{I})$ and as such, (\ref{Eq: HyperGeometric}) simplifies to
\begin{equation}\label{Eq: Unc HyperGeometric}
F_{\mathrm{u}} = \underset{i \neq j}{\sum_{i=1}^{N} \sum_{j=1}^{N}} \dfrac{\pi}{4} = N(N-1)\dfrac{\pi}{4}.
\end{equation}
From $\Norm{\myVM{a}{b}{}}{2}{} = \sqrt{M}$, the value of $\Exp{\text{SNR}}$ is,
\begin{align}\label{Eq: Unc E{SNR}}
\hspace{-0.3em} \left( \hspace{-0.3em} \beta_{\mathrm{d}}M + 
\dfrac{\sqrt{M}N\pi}{2} \sqrt{\beta_{\mathrm{d}}\beta_{\mathrm{br}}\beta_{\mathrm{ru}}} 
 + \beta_{\mathrm{br}}\beta_{\mathrm{ru}}M(N + F_{\mathrm{u}}) \hspace{-0.3em} \right) \hspace{-0.3em} \bar{\tau}.
\end{align}

\noindent Using \eq{\ref{Eq: Unc HyperGeometric}}, an exact expression for $\text{Var}\left\{\text{SNR}\right\}$ is given by
\begin{align}\label{Eq: Unc var{SNR}}
 &\Big(\beta_{\mathrm{d}}^{2} M 
+ \beta_{\mathrm{d}}^{3/2}\sqrt{\beta_{\mathrm{br}}\beta_{\mathrm{ru}}}N\pi(B_{\mathrm{u}} - 2M^{3/2}) \notag \\
& + \beta_{\mathrm{d}}\beta_{\mathrm{br}}\beta_{\mathrm{ru}} M \left(4(N+F_{\mathrm{u}}) - \dfrac{N^2\pi^2}{4} \right) \notag\\
& + M^{3/2} \sqrt{\beta_{\mathrm{d}}}\left(\beta_{\mathrm{br}}\beta_{\mathrm{ru}}\right)^{3/2} \left( 2\sqrt{\pi}C_{\mathrm{u1}} - N(N+F_{\mathrm{u}})\pi \right) 	\notag \\
& + (M\beta_{\mathrm{br}}\beta_{\mathrm{ru}})^2 \left(C_{\mathrm{u2}} - (N+F_{\mathrm{u}})^2\right) \Big) \bar{\tau}^{2},
\end{align}
with
\begin{equation*}
	B_{\mathrm{u}} = M^{3/2} + \dfrac{\sqrt{M}}{2},
	\hspace{3pt}
	C_{\mathrm{u1}} = \dfrac{N\sqrt{\pi}}{2}\hspace{-2pt}\left(\dfrac{\pi}{4}\prod_{k=1}^{2}(N-k) 
	\hspace{-2pt}+\hspace{-2pt} 3N \hspace{-2pt}-\hspace{-2pt} \frac{3}{2} \right)
\end{equation*}
\begin{align*}
	C_{\mathrm{u2}} & = 2N + \binom{N}{2}\left(\prod_{k=2}^{3}(N-k)\dfrac{\pi^{2}}{8} + 6 + 3\pi(N-1)\right)
\end{align*}
where  $F_{\mathrm{u}}$ is given by (\ref{Eq: Unc HyperGeometric}). Derivations for $C_{\mathrm{u1}}$, $C_{\mathrm{u2}}$ are omitted for reasons of space, but can be easily obtained by expanding $Y^{3}$, $Y^{4}$ and computing the expectation of each term using the following: $\Exp{\Abs{\myVMIndex{\tilde{h}}{ru}{i}{}}{4}}=2$, $\Exp{\Abs{\myVMIndex{\tilde{h}}{ru}{i}{}}{3}}=3\sqrt{\pi}/4$, $\Exp{\Abs{\myVMIndex{\tilde{h}}{ru}{i}{}}{2}}=1$, $\Exp{\Abs{\myVMIndex{\tilde{h}}{ru}{i}{}}{}}=\sqrt{\pi}/2$.


\section{Performance insights based on $\Exp{\text{SNR}}$}\label{Sec: Asymp Anly}
Correlation in $\myVM{h}{d}{}$ and $\myVM{h}{ru}{}$ have separable effects on the mean SNR in \eq{\ref{Eq: Corr E{SNR}}}. Specifically, the second term in \eq{\ref{Eq: Corr E{SNR}}} is only affected by $\rho_{\textrm{d}}$ and the third term is only affected by $\rho_{\textrm{ru}}$ whilst the first term is affected by neither. Here, we present an analysis of $\Exp{\text{SNR}}$ with respect to the correlations $\rho_{\mathrm{ru}},\rho_{\mathrm{d}}$ and give asymptotic results for large $N$.

\subsection{Effect of $\rho_{\mathrm{ru}}$ on $\Exp{\text{SNR}}$}\label{SubSec: rho_ru vs E{SNR{}}}
With respect to the correlation level $\rho_{\mathrm{ru}}$, \eq{\ref{Eq: Corr E{SNR}}} can be lower and upper bounded due to $F$ being monotonic in $\rho_{\mathrm{ru}}$ \cite[Eq. (15.1.1)]{Stegan}.
The upper bound for \eq{\ref{Eq: HyperGeometric}} as $\rho_{\mathrm{ru}} \rightarrow 1$ is
\begin{equation}\label{Eq: Hypergeometric Func UB}
	F_{\mathrm{UB}} 
	\overset{(a)}{=} 
	\underset{i \neq j}{\sum_{i=1}^{N} \sum_{j=1}^{N}} \frac{\pi}{4}
	{}_{2}F_{1}\left(-\frac{1}{2},-\frac{1}{2};1;1 \right)
	\overset{(b)}{=} 
	N(N-1),
\end{equation} 
where $(a)$ uses \cite[Eq. (15.3.3)]{Stegan} to perform a linear transformation of the hypergeometric function and $(b)$ uses \cite[Eq. (15.1.20)]{Stegan} to reduce the hypergeometric function to a known value along with evaluation of the double summation. Using \eq{\ref{Eq: Unc HyperGeometric}} and \eq{\ref{Eq: Hypergeometric Func UB}}, \eq{\ref{Eq: HyperGeometric}} has the following upper and lower bounds,
\begin{equation}\label{Eq: HyperGeometric func LB/UB}
	F_{\mathrm{u}}=N(N-1)\frac{\pi}{4}
	\leq
	F 
	\leq
	N(N-1).
\end{equation}

\noindent Hence the upper and lower bounds for SNR w.r.t. $\rho_{\text{ru}}$ are
\begin{equation}\label{Eq: Corr E{SNR} UB}
	\mathbb{E}\left\{\text{SNR}\right\}_{\mathrm{UB}} \hspace{-0.3em}=\hspace{-0.3em} 
	\left( \beta_{\mathrm{d}}M + 
	\dfrac{NA\pi}{2} \sqrt{\beta_{\mathrm{d}}\beta_{\mathrm{br}}\beta_{\mathrm{ru}}} + \beta_{\mathrm{br}}\beta_{\mathrm{ru}}MN^{2} \right)
	\hspace{-0.3em}\bar{\tau},
\end{equation}
\begin{align}\label{Eq: Corr E{SNR} LB}
& \mathbb{E}\left\{\text{SNR}\right\}_{\mathrm{LB}} = \notag \\
& \left( \beta_{\mathrm{d}}M + 
\dfrac{NA\pi}{2} \sqrt{\beta_{\mathrm{d}}\beta_{\mathrm{br}}\beta_{\mathrm{ru}}} 
+ \beta_{\mathrm{br}}\beta_{\mathrm{ru}}M(N + F_{\mathrm{u}}) \right) \bar{\tau}.
\end{align}
We observe that correlation in $\myVM{h}{ru}{}$ improves the mean SNR. From \eqref{Eq: Corr E{SNR} UB} and \eqref{Eq: Corr E{SNR} LB} we observe that the first term in $\mathbb{E}\left\{\text{SNR}\right\}$ is of order $M$ and the third term is of order $MN^2$. In Sec.~\ref{SubSec: rho_d vs E{SNR{}}} we show that the second term has a maximum order of $MN$. Hence, the third term is dominant giving $\mathbb{E}\left\{\text{SNR}\right\}=\mathcal{O}(MN^2)$ and the largest channel effect will be an increase in SNR due to correlation at the RIS.

\subsection{Effect of $\rho_{\mathrm{d}}$ on $\Exp{\text{SNR}}$}\label{SubSec: rho_d vs E{SNR{}}}
The only variable in \eq{\ref{Eq: Corr E{SNR}}} that is affected by $\rho_{\text{d}}$ is $A$. As $\rho_{\mathrm{d}} \rightarrow 0$, $A  \rightarrow \sqrt{M}$. As $\rho_{\mathrm{d}} \rightarrow 1$, $\myVM{R}{d}{} \rightarrow \mathbf{1}_{M}$ which means that 
\begin{equation*}
	A 
	= \Norm{\myVM{R}{d}{1/2}\myVM{a}{b}{}}{2}{}
	 \rightarrow \frac{1}{\sqrt{M}}\Norm{\mathbf{1}_{M}\myVM{a}{b}{}}{2}{} 
	= \Abs{\sum_{i=1}^{M} \myVMIndex{a}{b}{i}{}}{} \leq M.
\end{equation*}
Hence, the second term in \eq{\ref{Eq: Corr E{SNR}}} has a maximum order of $MN$ as stated above. Although the maximum order can be achieved with perfect correlation ($\rho_{\text{d}}=1$),    for typical environments the value of $A$ tends to reduce as correlation is increased. To explain this property we show via an example based on a uniform linear array (ULA) that for highly correlated BS antennas, $\sqrt{M}$ tends to be larger than $\Abs{\sum_{i=1}^{M} \myVMIndex{a}{b}{i}{}}{}$ for large $M$. For a ULA, ignoring elevation angles, the steering vector elements can be given by $\myVMIndex{a}{b}{i}{} = e^{j 2 \pi (i-1) d_{\mathrm{b}} \sin(\theta)}$ where $\theta$ is the azimuth angle of arrival at the BS. Here,
\begin{equation*}
	\Abs{\sum_{i=1}^{M} \myVMIndex{a}{b}{i}{}}{}
	=  \Abs{\frac{1-e^{j2M\pi d_{\mathrm{b}} \sin(\theta)}}{1-e^{j2\pi  d_{\mathrm{b}} \sin(\theta)}}}{}
	= \Abs{\frac{\sin(M\pi d_{\mathrm{b}} \sin(\theta))}{\sin(\pi d_{\mathrm{b}} \sin(\theta))}}{}.
\end{equation*}
This well-known sinusoidal ratio is much smaller than $\sqrt{M}$ for large $M$ and as long as $\theta$ is not arbitrarily close to zero. Therefore, systems with a large number of BS antennas can expect greater system performance when $\myVM{h}{d}{}$ is uncorrelated unless the RIS-BS link is extremely close to broadside. 

In summary, high correlation at the RIS and low correlation at the BS tend to be beneficial. We refer to this scenario as the \textit{favorable channel scenario}.

\subsection{Favorable Channel Scenario and Asymptotic Analysis}\label{SubSec: Asymp Anly ideal channel}
Using the analysis in Sec. \ref{SubSec: rho_ru vs E{SNR{}}} and Sec. \ref{SubSec: rho_d vs E{SNR{}}}, the favorable channel scenario is given by: $\myVM{h}{ru}{} \sim \mathcal{CN}(\mathbf{0},\mathbf{1}_{N})$, $\myVM{h}{d}{} \sim \mathcal{CN}(\mathbf{0},\mathbf{I}_{M})$. The resulting mean SNR is obtained by substituting the UB of \eq{\ref{Eq: HyperGeometric func LB/UB}} and $A=\sqrt{M}$ into \eq{\ref{Eq: Corr E{SNR}}}, giving
\begin{align}\label{Eq: Ideal E{SNR}}
	& \mathbb{E}\left\{\text{SNR}_{\text{fav}}\right\} \notag \\
	& = \left( 
	\beta_{\mathrm{d}}M + 
	\dfrac{N\sqrt{M}\pi}{2} \sqrt{\beta_{\mathrm{d}}\beta_{\mathrm{br}}\beta_{\mathrm{ru}}} +
	 \beta_{\mathrm{br}}\beta_{\mathrm{ru}}MN^{2} \right)\bar{\tau}.
\end{align}

Next, we consider the asymptotic gains achievable through increased correlation at the RIS. The relative gain due to correlation can be defined as
\begin{align}\label{Eq: Avg SNR Gain}
	\text{Gain}_{\textrm{corr}}
	& = \frac{\Exp{\text{SNR}}_{\text{UB}} - \Exp{\text{SNR}}_{\text{LB}}}{\Exp{\text{SNR}}_{\text{LB}}} \notag \\
	& \overset{(a)}{=} 
	\frac{(4-\pi)N^{2} + (\pi-4)N}{\pi N^{2} + (4-\pi)N + \frac{4\beta_{\mathrm{d}}}{\beta_{\mathrm{br}}\beta_{\mathrm{ru}}} + \frac{2NA\pi\sqrt{\beta_{\mathrm{d}}}}{M\sqrt{(\beta_{\mathrm{br}}\beta_{\mathrm{ru}})}}}
\end{align}
where $(a)$ involves substituting \eq{\ref{Eq: Corr E{SNR} UB}} and \eq{\ref{Eq: Corr E{SNR} LB}} and performing simple algebraic manipulations. Therefore, as $N \rightarrow \infty$,
\begin{equation*}
	\text{Gain}_{\text{max}} 
	= \lim\limits_{N \rightarrow \infty} \text{Gain}_{\textrm{corr}}
	= \frac{4-\pi}{\pi} \approx 27.32\%.
\end{equation*}
Hence, for a large RIS, the maximum gain due to correlation in the UE-RIS Rayleigh channel is approximately 27.32\%. Also, observe that negative effects on the mean SNR due to correlation in the direct channel are minimized as $N$ increases.
\section{Results}\label{Sec: Results}
We present numerical results to verify the analysis in Secs. \ref{Sec: E{SNR} and var{SNR}} and \ref{Sec: Asymp Anly}. We do not consider cell-wide averaging as the focus is on the SNR distribution over the fast fading. Furthermore, the relationship between the SNR and the gains, $\beta_{\mathrm{d}},\beta_{\mathrm{br}}$ and $\beta_{\mathrm{ru}}$, is straightforward, as shown in \eq{\ref{Eq: Corr E{SNR}}}. Hence, we present numerical results for fixed link gains. In particular, as the RIS-BS link is LOS we assume $\beta_{\mathrm{br}}=d_{\mathrm{br}}^{-2}$ where $d_{\mathrm{br}}=20$m. Next, for simplicity, $\beta_{\mathrm{d}}=\beta_{\mathrm{ru}}=0.59$. This was chosen to give the 95\%-ile of the SNR distribution as 25 dB in the baseline case of moderate channel correlation (see $\rho_{\mathrm{d}}=\rho_{\mathrm{ru}}=0.7$ in Fig.~\ref{Fig: CDF Agreements}), with $M=32,N=64$.


As stated in Sec. \ref{Sec: Channel Model}, the steering vectors for $\myVM{H}{br}{}$ are not restricted to any particular formation. However, for simulation purposes, we will use the VURA model as outlined in \cite{CMiller}, but in the $y-z$ plane with equal spacing in both dimensions at both the RIS and BS. The $y,z$ components of the steering vector at the BS are $\myVM{a}{b,y}{} $ and $\myVM{a}{b,z}{}$ which are given by
\begin{align*}
	 & [1, e^{j2\pi d_{\mathrm{b}} \sin(\theta_{\mathrm{A}})\sin(\omega_{\mathrm{A}})}, \ldots, e^{j2\pi d_{\mathrm{b}} (M_{y}-1) \sin(\theta_{\mathrm{A}})\sin(\omega_{\mathrm{A}})}]^{T}, \\
	 & [1, e^{j2\pi d_{\mathrm{b}} \cos(\theta_{\mathrm{A}})}, \ldots, e^{j2\pi d_{\mathrm{b}}(M_{z}-1) \cos(\theta_{\mathrm{A}})}]^{T},
\end{align*}
respectively. Similarly at the RIS, $\myVM{a}{r,y}{}$ and $\myVM{a}{r,z}{}$ are defined by,
\begin{align*}
	& [1, e^{j2\pi d_{\mathrm{r}} \sin(\theta_{\mathrm{D}})\sin(\omega_{\mathrm{D}})}, \ldots, e^{j2\pi d_{\mathrm{r}}(N_{y}-1) \sin(\theta_{\mathrm{D}})\sin(\omega_{\mathrm{D}})}]^{T}, \\
	& [1, e^{j2\pi d_{\mathrm{r}} \cos(\theta_{\mathrm{D}})}, \ldots, e^{j2\pi d_{\mathrm{r}} (N_{z}-1) \cos(\theta_{\mathrm{D}})}]^{T},
\end{align*}
respectively where $M = M_{y}M_{z}$, $N = N_{y}N_{z}$, $d_{\mathrm{b}}=0.5$, $d_{\mathrm{r}}=0.2$, where $d_{\mathrm{b}}$ and $d_{\mathrm{r}}$ are in wavelength units. Therefore, the steering vectors at the BS and RIS are then given by,
\begin{equation}
	\myVM{a}{b}{} = \myVM{a}{b,y}{} \otimes \myVM{a}{b,z}{}
	\quad,\quad
	\myVM{a}{r}{} = \myVM{a}{r,y}{} \otimes \myVM{a}{r,z}{},
\end{equation} 
respectively, where  $\otimes$ denotes the Kronecker product, $\theta_{\mathrm{A}}$ and $\omega_{\mathrm{A}}$ are elevation/azimuth angles of arrival (AOAs) at the BS and $\theta_{\mathrm{D}},\omega_{\mathrm{D}}$ are the corresponding angles of departure (AODs) at the RIS. The elevation/azimuth angles are selected based on the following geometry representing a range of LOS $\myVM{H}{br}{}$ links with less elevation variation than azimuth variation: 
$	
\theta_{D} \sim \mathcal{U}[70^{o},90^{o}], \hspace{1em}
\omega_{D} \sim \mathcal{U}[-30^{o},30^{o}], \hspace{1em}  
\theta_{A} = 180^{o} - \theta_{D}, \hspace{1em} 
\omega_{A} \sim \mathcal{U}[-30^{o},30^{o}] 
$ where $\mathcal{U}[a,b]$ denotes a uniform random variable taking on values between $a$ and $b$. For all results in this paper we use a single sample from this range of angles given by $\theta_{D}=77.1^{o}, \omega_{D}=19.95^{o}, \theta_{A}=109.9^{o}, \omega_{A}=-29.9^{o}$.
Note that all of these parameter values and variable definitions are not altered throughout the results and figures, unless specified otherwise.
\subsection{Approximate CDF for SNR}\label{SubSec: Gamma approx fit for SNR}
It is known that the SNR of a wide range of fading channels can be approximated by a mixture gamma distribution \cite{GammaApproxSNR}. Also, it is well-known that a single gamma approximation is often reasonable for a sum of a number of positive random variables \cite{GammaApproxY}. Motivated by this, we approximate the SNR in \eq{\ref{Eq: SNR Eq}} by a single gamma variable.

The shape parameter of a gamma approximation to the SNR is given by $k_{\gamma}=\frac{\Exp{\text{SNR}}^{2}}{\Var{\text{SNR}}}$ and the scale parameter is $\theta_{\gamma}=\frac{\Var{\text{SNR}}}{\Exp{\text{SNR}}}$ where $\Exp{\text{SNR}}$ and $\Var{\text{SNR}}$ are given in Sec. \ref{Sec: E{SNR} and var{SNR}}. Using these values of $k_{\gamma},\theta_{\gamma}$, the analytical and simulated SNR CDFs  are shown in Fig. \ref{Fig: CDF Agreements} for $N=64$ and $N=256$, both with  $\rho_{\mathrm{ru}}=\rho_{\mathrm{d}} \in \{0,0.7,0.95\}$. When computing the analytical SNR CDFs, for $\rho_{\mathrm{ru}}=\rho_{\mathrm{d}}=0$, \eq{\ref{Eq: Unc E{SNR}}} and \eq{\ref{Eq: Unc var{SNR}}} were used and for $\rho_{\mathrm{ru}}=\rho_{\mathrm{d}}\neq0$, \eq{\ref{Eq: Corr E{SNR}}} and \eq{\ref{Eq: Corr var{SNR}}} were used.
\begin{figure}[h]
	\centering
	\myincludegraphics{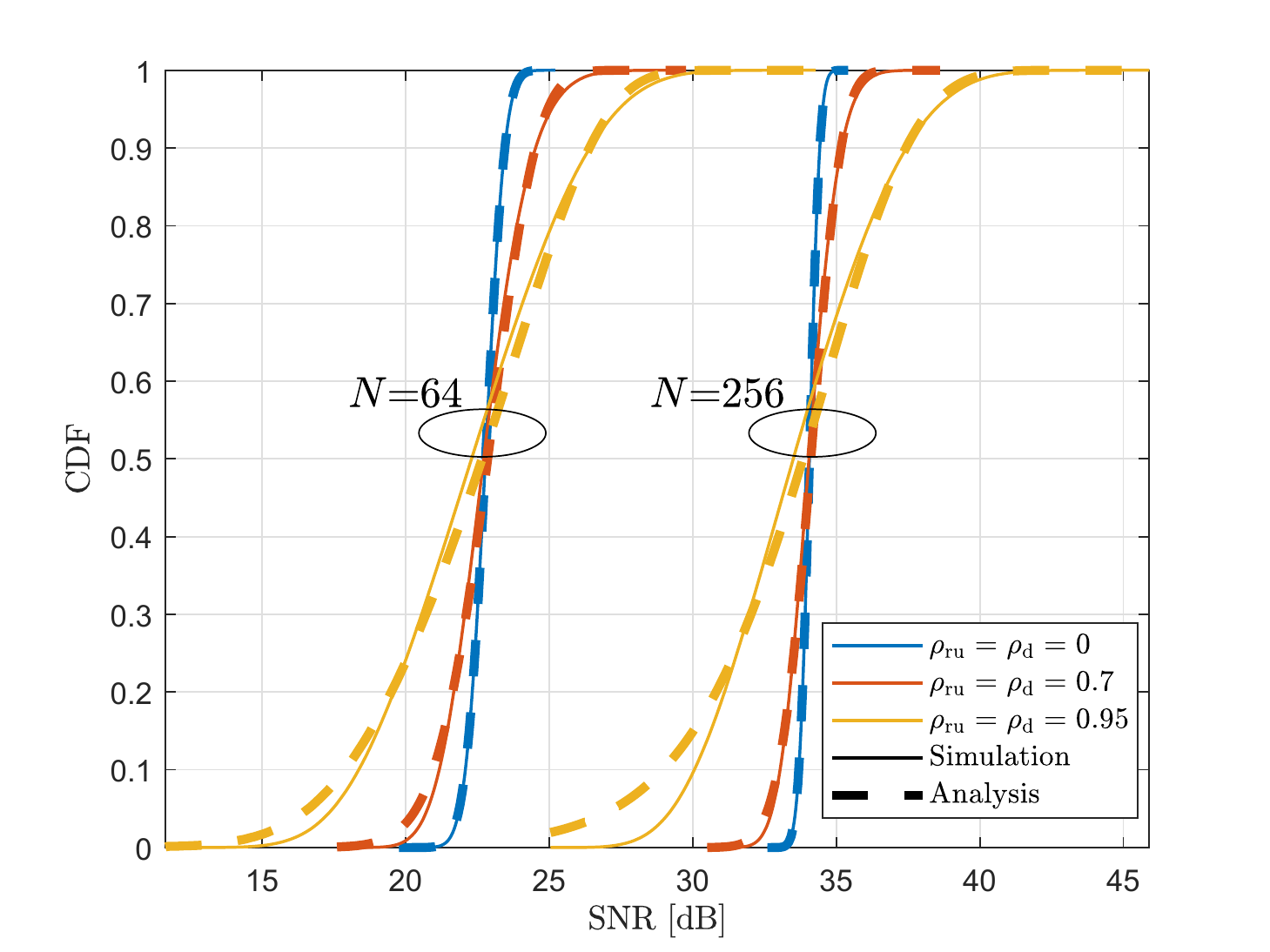}
	\raisecapt\caption{Simulated and analytical CDFs for $N=64$ and $N=256$, both with  $\rho_{\mathrm{ru}}=\rho_{\mathrm{d}}=\{0,0.7,0.95\}$}
	\label{Fig: CDF Agreements}
\end{figure}
As expected, there is a very good agreement between the simulated and analytical SNR CDFs when $\rho_{\mathrm{d}}=\rho_{\mathrm{d}}=0$ due to exact mean SNR and SNR variance expressions. Increasing the correlation level causes the CDF agreement to deviate slightly in the low SNR region, especially in the highest correlation scenario. Good agreement is maintained in the mid-high SNR region. The gamma distribution therefore provides a good representation of the UL SNR unless the correlations become very high.

\subsection{$\rho_{\mathrm{d}},\rho_{\mathrm{ru}}$ and Asymptotic Analysis Results}\label{SubSec: Asymptotic Analysis Results}
Here, we verify the performance insights based on $\Exp{\text{SNR}}$. Fig. \ref{Fig: Rho analysis} represents SNR simulations and analysis for three different correlation  scenarios: $\rho_{\text{ru}}=\rho_{\text{d}}=0$, $\rho_{\text{ru}}=\rho_{\text{d}}=1$ and the favorable channel scenario $\rho_{\text{ru}}=1,\rho_{\text{d}}=0$.
\begin{figure}[h]
	\centering
	\myincludegraphics{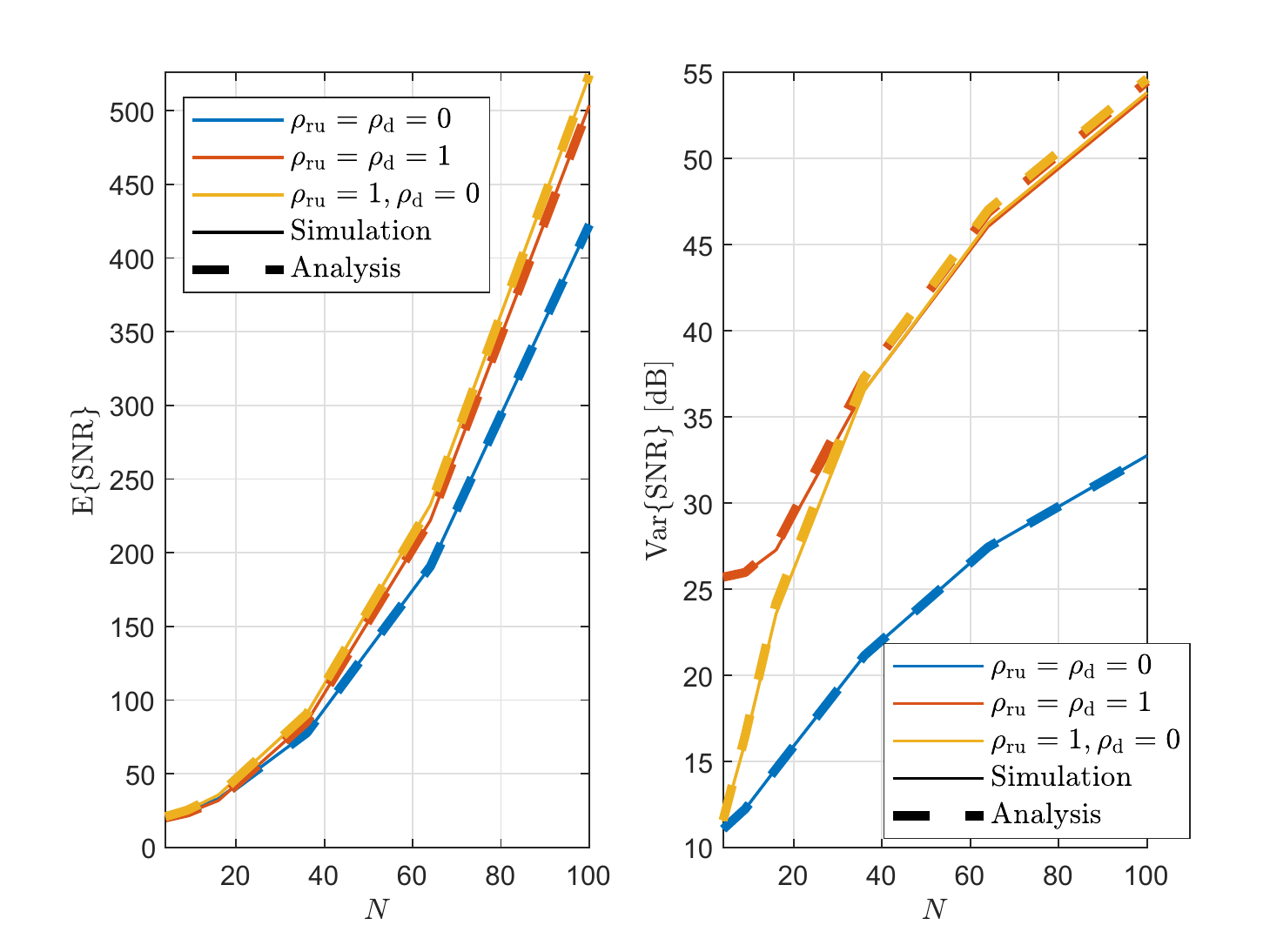}
	\raisecapt\caption{Simulated and analytical mean and variance results for SNR with three different correlation scenarios: $\rho_{\text{ru}}=\rho_{\text{d}}=0$, $\rho_{\text{ru}}=\rho_{\text{d}}=1$ and favorable channel scenario $\rho_{\text{ru}}=1,\rho_{\text{d}}=0$.}
	\label{Fig: Rho analysis}
\end{figure}
The left subfigure in Fig. \ref{Fig: Rho analysis} shows the quadratic increase in mean SNR, as predicted by the analysis. The favorable channel scenario yields the highest mean SNR, but the increase over perfect correlation in both channels is marginal. The lowest SNR occurs when both channels are uncorrelated. Fig. \ref{Fig: Rho analysis} also shows that for all correlation scenarios, the theoretical analysis agrees with simulations.

The right subfigure in Fig. \ref{Fig: Rho analysis} shows the accuracy of the SNR variance approximation. There is a perfect agreement for the case where $\myVM{h}{ru}{}$ and $\myVM{h}{d}{}$ are uncorrelated. In the correlated cases, the analysis and simulation agree closely for low $N$ but begin to deviate slightly as $N$ grows. This is also reflected in Fig. \ref{Fig: CDF Agreements} as the CDF agreement between simulation and analysis is worse for $N=256$ compared to $N=64$ as $\rho_{\text{d}},\rho_{\text{ru}} \rightarrow 1$. Note that the analytical CDFs show a longer lower tail, partially caused by the over-estimate of the variance. As $N$ grows, observe that the variance for scenarios $\rho_{\mathrm{ru}}=\rho_{\mathrm{d}}=1$ and  $\rho_{\mathrm{ru}}=1,\rho_{\mathrm{d}}=0$ converge to approximately the same value since the effects of correlation in $\myVM{h}{d}{}$ are reduced by large $N$. 

Finally, in Fig. \ref{Fig: Asymptotic Analysis} we verify the mean SNR relative gain due to correlation in $\myVM{h}{ru}{}$, and verify the asymptotic analysis in Sec. \ref{SubSec: rho_ru vs E{SNR{}}} and Sec. \ref{SubSec: Asymp Anly ideal channel}. For simplicity, we assume all three channels have the same link gain and let $\beta_{\mathrm{d}}=\beta_{\mathrm{ru}}=\beta_{\mathrm{br}}=1$.
\begin{figure}[h]
	\centering
	\myincludegraphics{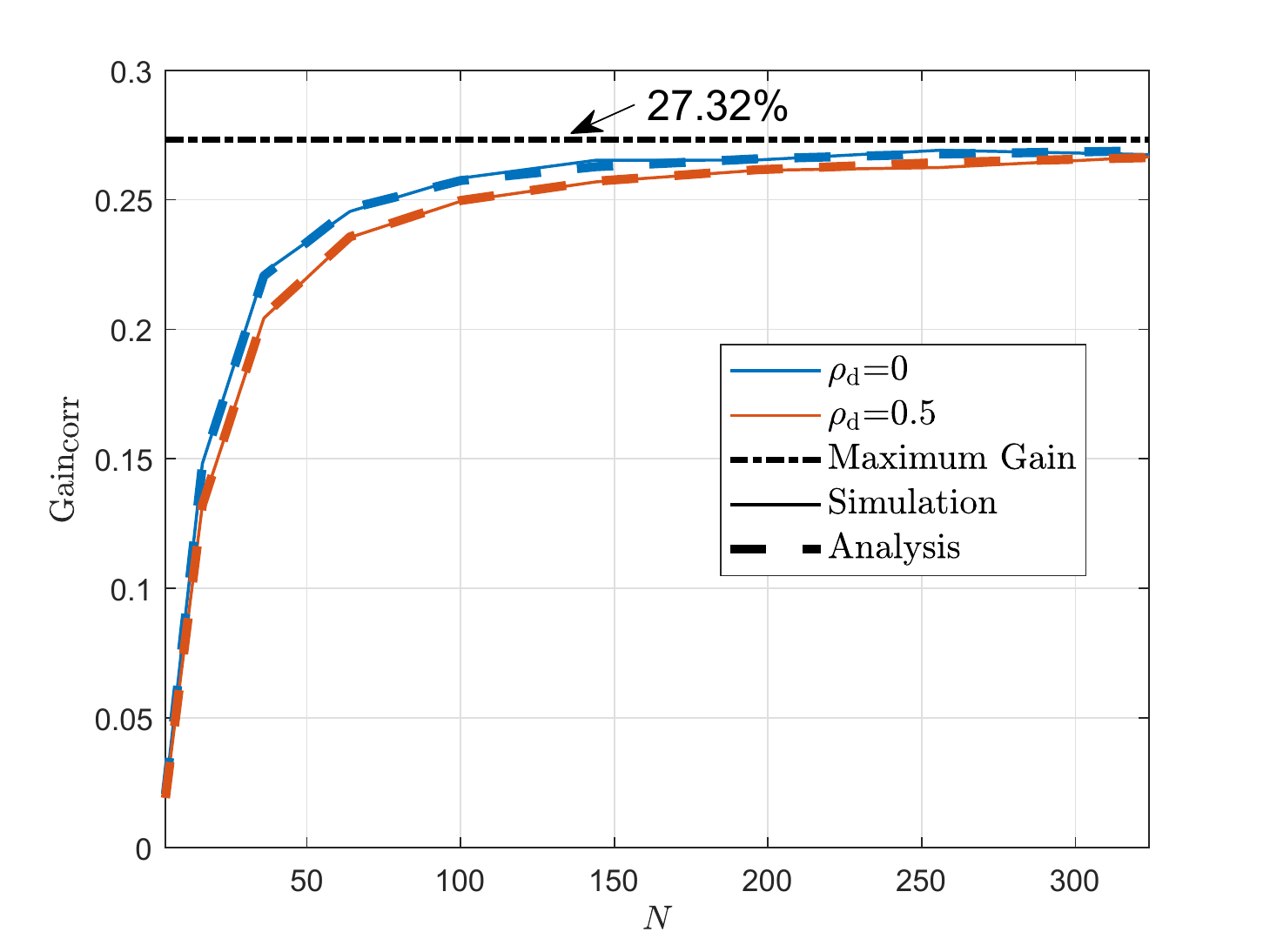}
	\raisecapt\caption{Average SNR gain due to correlation in $\myVM{h}{ru}{}$ for varying RIS sizes and correlation in $\myVM{h}{d}{}$. }
	\label{Fig: Asymptotic Analysis}
\end{figure}
Fig. \ref{Fig: Asymptotic Analysis} verifies the analysis in Sec. \ref{SubSec: Asymp Anly ideal channel} demonstrating an increasing gain with correlation saturating at approximately 27.32\%. Introducing correlation in the direct channel causes the gain to be lower. However, as explained in the analysis, for large RIS elements, this negative effect is reduced.

\section{Conclusion}\label{Sec: Conclusion}
 We derive an exact closed form expression for the mean SNR  of the optimal single user RIS design where spatially correlated Rayleigh fading is assumed for the UE-BS and UE-RIS channels and the RIS-BS channel is LOS. We also provide an accurate approximation to the SNR variance and a gamma approximation to the CDF of the SNR. The results offers new insight into how spatial correlation impacts the mean SNR and scenarios in which we would expect high SNR performance.

\vspace{-12pt}
\begin{appendices} 
\section{$\mathbb{E}\{\text{SNR}\}$ derivation in Sec.~\ref{SubSec: Corr Rayl}}\label{AppA: E{SNR} Corr Deri}
Substituting the channel vectors and matrices described in Sec.~\ref{Sec: Channel Model} into \eq{\ref{Eq: SNR Eq}} gives
\begin{align*}
	\text{SNR} &= 
	\left( \beta_{\mathrm{d}}\myVM{u}{d}{H}\myVM{R}{d}{}\myVM{u}{d}{} + 
	2 \sqrt{\beta_{\mathrm{d}}}\Re\left\{ \alpha \myVM{u}{d}{H}\myVM{R}{d}{1/2}\myVM{a}{b}{} \right\} + 
	\Abs{\alpha}{2}M \right) \bar{\tau}\notag \\
& \triangleq (S_1+S_2+S_3)\bar{\tau}.
\end{align*}
We compute $\mathbb{E}\left\{ \text{SNR} \right\}$ by taking the expectation of each term.

\noindent \textbf{Term 1}: Since $\myVM{u}{d}{} \sim \mathcal{CN}(\mathbf{0},\mathbf{I})$,
\begin{equation}\label{AppA: part 1}
	\Exp{S_1} = \beta_{\mathrm{d}} \tr{R}{d}{} = \beta_{\mathrm{d}} M.
\end{equation}

\noindent \textbf{Term 2}:
Substituting $\alpha$ from Sec.~\ref{Sec: Optimal SNR} we have,
\begin{align*}
	\Exp{S_2}
	& = 2\sqrt{\beta_{\mathrm{d}}\beta_{\mathrm{br}}\beta_{\mathrm{ru}}} 
	\mathbb{E}\left\{ Y \right\} 
	\mathbb{E}\left\{ \Abs{\myVM{a}{b}{H} \myVM{R}{d}{1/2} \myVM{u}{d}{}}{} \right\}.
\end{align*}
Since $ \myVM{h}{ru}{} \sim \mathcal{CN}(\mathbf{0},\myVM{R}{ru}{})$ it follows that $\mathbb{E}\left\{ Y \right\} = \frac{N\sqrt{\pi}}{2}$ \cite{8567939}. To compute $\mathbb{E}\left\{ \Abs{\myVM{a}{b}{H} \myVM{R}{d}{1/2} \myVM{u}{d}{}}{} \right\}$, note the following results. As $\myVM{u}{d}{} \sim \mathcal{CN}(\boldsymbol{0},\mathbf{I}_{M})$, it follows that $\myVM{a}{b}{H} \myVM{R}{d}{1/2} \myVM{u}{d}{}$ is zero mean complex Gaussian with
$
	\Exp{|\myVM{a}{b}{H} \myVM{R}{d}{1/2} \myVM{u}{d}{}|^2  }
	= \Norm{\myVM{R}{d}{1/2}\myVM{a}{b}{}}{2}{2}.
$
As such, we have $\Abs{\myVM{a}{b}{H} \myVM{R}{d}{1/2} \myVM{u}{d}{}}{} \sim 
\frac{1}{\sqrt{2}}\Norm{\myVM{R}{d}{1/2}\myVM{a}{b}{}}{2}{} X^{1/2}$ where $X \sim \chi_{2}^{2}$ is a central chi-squared variable. By the moments of a central chi-square distribution with 2 degrees of freedom,
\begin{equation}
	\mathbb{E}\left\{\Abs{\myVM{a}{b}{H} \myVM{R}{d}{1/2} \myVM{u}{d}{}}{} \right\} = \Norm{\myVM{R}{d}{1/2}\myVM{a}{b}{}}{2}{} \dfrac{\sqrt{\pi}}{2}.
\end{equation}
Hence,
\begin{equation}\label{AppA: part 2}
	\Exp{S_2} = \dfrac{NA\pi}{2} \sqrt{\beta_{\mathrm{d}}\beta_{\mathrm{br}}\beta_{\mathrm{ru}}},
\end{equation}
where $A = \Norm{\myVM{R}{d}{1/2}\myVM{a}{b}{}}{2}{}$.

\noindent \textbf{Term 3}:
Using  $\Abs{\psi}{} = 1$ (where $\psi$ is given in Sec.~\ref{Sec: Optimal SNR}) we have $S_3= 
M\beta_{\mathrm{br}}\beta_{\mathrm{ru}} Y^{2}$ and expanding $Y$ gives
\begin{align*}
	\mathbb{E}\left\{ Y^{2} \right\} & = 
	\sum_{i=1}^{N}\mathbb{E}\left\{ \Abs{\myVMIndex{\tilde{h}}{ru}{i}{}}{2} \right\}
	+ \underset{i \neq j}{\sum_{i=1}^{N} \sum_{j=1}^{N}} \mathbb{E}\left\{ \Abs{\myVMIndex{\tilde{h}}{ru}{i}{}}{} \Abs{\myVMIndex{\tilde{h}}{ru}{j}{}}{}  \right\}.
\end{align*}
Using \cite[Eq. (16)]{8567939}, each term in the double summation is,
$$
\mathbb{E}\left\{ \Abs{\myVMIndex{\tilde{h}}{ru}{i}{}}{} \Abs{\myVMIndex{\tilde{h}}{ru}{j}{}}{}  \right\} 
= \dfrac{\pi}{4}\left( 1 - \left\lvert\rho_{ij}\right\rvert^2\right)^2 {}_{2}F_{1}\left(\frac{3}{2},\frac{3}{2};1;\left\lvert\rho_{ij}\right\rvert^2 \right),
$$
where ${}_{2}F_{1}(\cdot)$ is the Gaussian hypergeometric function and $\rho_{ij} = \left(\myVM{R}{ru}{} \right)_{ij}$. Using this, we have $ \mathbb{E}\left\{ Y^{2} \right\} = N + F, $
where $F$ is given by \eq{\ref{Eq: HyperGeometric}}, giving the final result
\begin{equation}\label{AppA: part 3}
	\Exp{S_3} = \beta_{\mathrm{br}}\beta_{\mathrm{ru}}M(N+F).
\end{equation}
Combining \eq{\ref{AppA: part 1}}, \eq{\ref{AppA: part 2}} and \eq{\ref{AppA: part 3}} completes the derivation.

\section{$\textsc{var}\{\text{SNR}\}$ derivation in Sec.~\ref{SubSec: Corr Rayl}}\label{AppB: var{SNR} Corr Deri}
To compute the variance we take the square of \eq{\ref{Eq: SNR Eq}} giving,
\begin{align}\label{Eq: SNR^2}
	\text{SNR}^{2} & = 
	\Big( \beta_{\mathrm{d}}^{2} \left( \myVM{u}{d}{H}\myVM{R}{d}{}\myVM{u}{d}{} \right)^{2} + 
	 4\beta_{\mathrm{d}}^{3/2} \myVM{u}{d}{H}\myVM{R}{d}{}\myVM{u}{d}{} \Re\left\{\alpha \myVM{u}{d}{H}\myVM{R}{d}{1/2}\myVM{a}{b}{}  \right\} \notag \\
	& +  2\beta_{\mathrm{d}} M \Abs{\alpha}{2} \myVM{u}{d}{H}\myVM{R}{d}{}\myVM{u}{d}{}  + 
	4\beta_{\mathrm{d}}\Re\left\{\alpha \myVM{u}{d}{H}\myVM{R}{d}{1/2}\myVM{a}{b}{}  \right\}^{2} \notag \\
	& + 4 \sqrt{\beta_{\mathrm{d}}} M \Abs{\alpha}{2} \Re\left\{ \alpha\myVM{u}{d}{H}\myVM{R}{d}{1/2}\myVM{a}{b}{} \right\} + \Abs{\alpha}{4} M^{2}  \Big) \bar{\tau}^{2}\notag \\
& \triangleq (T_1+T_2+T_3+T_4+T_5+T_6)\bar{\tau}^{2}.
\end{align}
Terms 1, 3, 4, 5 and 6 can be computed using standard results, the results in \cite[Eq. (9)]{8422818} and the methods given in App.~\ref{AppA: E{SNR} Corr Deri}. The results are,
\begin{equation}\label{AppB: Part 1}
	\Exp{T_1} = \beta_{\mathrm{d}}^{2} \left( \tr{R}{d}{2} + M^{2} \right),
\end{equation}
\begin{equation}\label{AppB: Part 3}
	\Exp{T_3} = 2\beta_{\mathrm{d}}\beta_{\mathrm{br}}\beta_{\mathrm{ru}} M^{2} (N+F),
\end{equation}
\begin{equation}\label{AppB: Part 4}
	\Exp{T_4}= 4\beta_{\mathrm{d}}\beta_{\mathrm{br}}\beta_{\mathrm{ru}}(N+F)\Norm{\myVM{R}{d}{1/2}\myVM{a}{b}{}}{2}{2},
\end{equation}
\begin{equation}\label{AppB: Part 5}
	\Exp{T_5} \hspace{-0.2em}=\hspace{-0.2em}2 M \sqrt{\pi} \sqrt{\beta_{\mathrm{d}}}\left( \beta_{\mathrm{br}}\beta_{\mathrm{ru}} \right)^{3/2} \Norm{\myVM{R}{d}{1/2}\myVM{a}{b}{}}{2}{} \hspace{-0.3em} \Exp{Y^3},
\end{equation}
\begin{equation}\label{AppB: Part 6}
	\Exp{T_6} = \left( M\beta_{\mathrm{br}}\beta_{\mathrm{ru}} \right)^{2} \Exp{Y^4}.
\end{equation}
The variables $Y^3$ and $Y^4$ are the sum of products of magnitudes of 3 and 4 correlated complex Gaussian random variables. To the best of our knowledge the mean of such terms is intractable without the use of multiple infinite summations and special functions. As such we use an approximation based on the gamma distribution to approximate $\Exp{Y^3},\Exp{Y^4}$ (see App.~\ref{AppD: E{Y^3},E{Y^4} Approx}). Term 2 requires more work and is derived below. 

\noindent \textbf{Term 2}: Expanding the second term gives,
\begin{align}\label{AppB: Part 2}
	\Exp{T_2}= 4 \beta_{\mathrm{d}}^{3/2}\sqrt{\beta_{\mathrm{br}}\beta_{\mathrm{ru}}} Y \myVM{u}{d}{H}\myVM{R}{d}{}\myVM{u}{d}{} 
	\Abs{\myVM{u}{d}{H}\myVM{R}{d}{1/2}\myVM{a}{b}{}}{},
\end{align}
where $Y$ is defined in Sec.~\ref{Sec: Optimal SNR}. To find the expectation of \eq{\ref{AppB: Part 2}}, we introduce the following variables: Let $\mathbf{P}$ be any orthonormal matrix with first column equal to $\mathbf{p}_{1} = {\myVM{R}{d}{1/2}\myVM{a}{b}{}}\Norm{\myVM{R}{d}{1/2}\myVM{a}{b}{}}{2}{-1}$. Also let $\myVM{x}{}{}=\mathbf{P}^{H}\myVM{u}{d}{} \sim \mathcal{CN}(\mathbf{0},\mathbf{I})$ and $\mathbf{Q}=\myVM{P}{}{H}\myVM{R}{d}{}\myVM{P}{}{}$, then the random component of \eq{\ref{AppB: Part 2}}, neglecting $Y$, can be rewritten as,
\begin{align*}
	\myVM{u}{d}{H}\myVM{R}{d}{}\myVM{u}{d}{} 
	\Abs{\myVM{u}{d}{H}\myVM{R}{d}{1/2}\myVM{a}{b}{}}{}  
	& = \myVM{x}{}{H}\myVM{Q}{}{}\myVM{x}{}{} \Abs{x_{1}}{}
	\Norm{\myVM{R}{d}{1/2}\myVM{a}{b}{}}{2}{},
\end{align*}
since $\myVM{P}{}{H}\myVM{p}{1}{} = [1,\mathbf{0}_{M-1}^{T}]^{T}$. Note that,
\begin{align*}
	\Exp{\myVM{x}{}{H}\myVM{Q}{}{}\myVM{x}{}{}\Abs{x_{1}}{}} 
	& 
	= \Exp{ \sum_{i=1}^{M}\sum_{j=1}^{M} \mathbf{Q}_{ij}x_{i}^{*}x_{j}\Abs{x_{1}}{} } \\
	& = \mathbf{Q}_{11}\Exp{\Abs{x_{1}}{3}} + \left(\tr{Q}{}{}-\myVM{Q}{11}{}\right)\Exp{\Abs{x_{1}}{}} \\
	& = \dfrac{\sqrt{\pi} \myVM{a}{b}{H}\myVM{R}{d}{2}\myVM{a}{b}{}}{4 A^2} + M\dfrac{\sqrt{\pi}}{2} 
\end{align*}
since $\Exp{|x_1|}=\sqrt{\pi}/2$, $\Exp{|x_1|^3}=3\sqrt{\pi}/4$, $\tr{Q}{}{} = \tr{R}{d}{} = M$ and $\mathbf{Q}_{11} = \myVM{p}{1}{H}\myVM{R}{d}{}\myVM{p}{1}{} = \frac{\myVM{a}{b}{H}\myVM{R}{d}{2}\myVM{a}{b}{}}{\myVM{a}{b}{H}\myVM{R}{d}{}\myVM{a}{b}{}}$ and $A$ is defined in App.~\ref{AppA: E{SNR} Corr Deri}. Using the above result and the result for $\Exp{Y}$ in App.~\ref{AppA: E{SNR} Corr Deri}, the expectation of \eq{\ref{AppB: Part 2}} is
\begin{align}\label{AppB: part 2 Final}
	& \Exp{T_2} = \beta_{\mathrm{d}}^{3/2}\sqrt{\beta_{\mathrm{br}}\beta_{\mathrm{ru}}}N B \pi,
\end{align}
where $B = MA + {\myVM{a}{b}{H} \myVM{R}{d}{2} \myVM{a}{b}{}}/{2A}$.

Combining \eq{\ref{AppB: Part 1}}, \eq{\ref{AppB: part 2 Final}},  \eq{\ref{AppB: Part 3}}, \eq{\ref{AppB: Part 4}}, \eq{\ref{AppB: Part 5}}, \eq{\ref{AppB: Part 6}} for $\Exp{\text{SNR}^2}$  and subtracting $\Exp{\text{SNR}}^2$ completes the derivation.

\section{Approximations for $\Exp{Y^3}$ and $\Exp{Y^4}$}\label{AppD: E{Y^3},E{Y^4} Approx}
Due to $Y$ being positive, unimodal and the sum of $N$ variables, we propose approximations for $\Exp{Y^3}$ and $\Exp{Y^4}$ using a gamma distribution as an approximation for $Y$ (using the same motivation as in Sec.~\ref{SubSec: Gamma approx fit for SNR}).
From App.~\ref{AppA: E{SNR} Corr Deri}, we know that $\Exp{Y}={N\sqrt{\pi}}/{2}$ and  $\Exp{Y^2} = N+F$, where $F$ is defined by \eq{\ref{Eq: HyperGeometric}}. Then, the variance of $Y$ is $\Var{Y}= N + F - \frac{N^{2}\pi}{4}$. Using the method of moments, the shape and scale parameters that define the gamma fit for $Y$ are, 
\begin{align*}
a = \dfrac{N^{2}\pi}{4(N+F) - N^{2}\pi}, \quad
b = \dfrac{2}{N\sqrt{\pi}}\left( N + F - \dfrac{N^{2}\pi}{4} \right),
\end{align*}
where $a$ and $b$ are the shape and scale parameters respectively. The 3$^{\text{rd}}$ and $4^{\text{th}}$ moments of $Y$ are approximated by, 
\begin{align*}
\Exp{Y^3}  & =  b^3 a \prod_{k=1}^{2}(k+a), \quad
\Exp{Y^4}  =  b^4 a \prod_{k=1}^{3}(k+a).
\end{align*}
Thus we have the results for $C_{1},C_{2}$ in Sec. \ref{SubSec: Corr Rayl}.

\end{appendices}

\bibliographystyle{IEEEtran}
\bibliography{RIS_Paper_ICC}

\end{document}